\documentclass[conference]{IEEEtran}

\IEEEoverridecommandlockouts
\usepackage[top=.7in, bottom=1.03in, left=.68in, right=.67in]{geometry}

\usepackage{setspace}
\usepackage{graphicx}
\usepackage{epstopdf}
\usepackage{color}

\usepackage[cmex10]{amsmath}
\usepackage{amsthm}
\usepackage{mathtools}
\usepackage{mathptmx} 
\usepackage{times} 
\usepackage{amssymb}
\usepackage{dsfont}
\usepackage{caption}
\usepackage{subcaption}
\usepackage{cite}

\hyphenation{op-tical net-works semi-conduc-tor}


\newcommand{\PRP}[1]{\ensuremath{\mathsf{Pr}\left(#1\right)}} 

\begin{document}

\title{Energy Harvesting Wireless Networks with Correlated Energy Sources}

\author{
\IEEEauthorblockN{Mehdi Salehi Heydar Abad}
\IEEEauthorblockA{\normalsize Faculty of Engineering \\
and Natural Sciences\\
Sabanci University,
Istanbul, Turkey\\ mehdis@sabanciuniv.edu}
\and
\IEEEauthorblockN{Deniz Gunduz}
\IEEEauthorblockA{\normalsize Department of Electrical \\
and Electronic Engineering,\\
Imperial College London, U.K.\\ d.gunduz@imperial.ac.uk}
\and
\IEEEauthorblockN{Ozgur Ercetin}
\IEEEauthorblockA{\normalsize Faculty of Engineering \\
and Natural Sciences\\
Sabanci University,
Istanbul, Turkey\\ oercetin@sabanciuniv.edu}
\thanks{This work was in part supported by EC H2020-MSCA-RISE-2015 programme under grant number 690893, by Tubitak under grant number 114E955 and by British Council Institutional Links Program under grant number 173605884.}
}

\maketitle
\newtheorem{theorem}{Theorem}
\newtheorem{lemma}{Lemma}
\newtheorem{corollary}{Corollary}
 
\begin{abstract}
This work considers a system with two energy harvesting (EH) nodes transmitting to a common destination over a random access channel. The amount of harvested energy is assumed to be random and independent over time, but correlated among the nodes possibly with respect to their relative position. A threshold-based transmission policy is developed for the maximization of the expected aggregate network throughput. Assuming that there is no a priori channel state or EH information available to the nodes, the aggregate network throughput is obtained. The optimal thresholds are determined for two practically important special cases: i) at any time only one of the sensors harvests energy due to, for example, physical separation of the nodes; ii) the nodes are spatially close, and at any time, either both nodes or none of them harvests energy.
\end{abstract}

\IEEEpeerreviewmaketitle

\section{Introduction}
Due to the tremendous increase in the number of battery-powered wireless communication devices over the past decade, harvesting of energy from natural resources has become an important research area as a mean of prolonging life time of such devices \cite{Paradiso, Niyato}. The various sources for energy harvesting (EH) are wind turbines, photovoltaic cells, thermoelectric generators and mechanical vibration devices such as piezoelectric devices, electromagnetic devices \cite{EHsource}. EH technology is considered as a promising solution especially for large scale wireless sensor networks (WSNs), where the replacement of batteries is often difficult or cost-prohibitive \cite{Anthony}. However, due to the random nature of the harvested energy from ambient sources, the design of the system requires a careful analysis. In particular, depending on the spatial distribution of EH devices, the amount of energy harvested by different devices is typically correlated. For example, consider EH devices harvesting energy from tidal motion \cite{book}. The locations of two EH devices may be such that one is located at the tidal crest, while the other one is located in a tidal trough. In such a case, there may be a time delay equal to the speed of one wavelength between the generation of energy at each device.

\begin{figure}[ht]
  \centering
    \includegraphics[scale=.2]{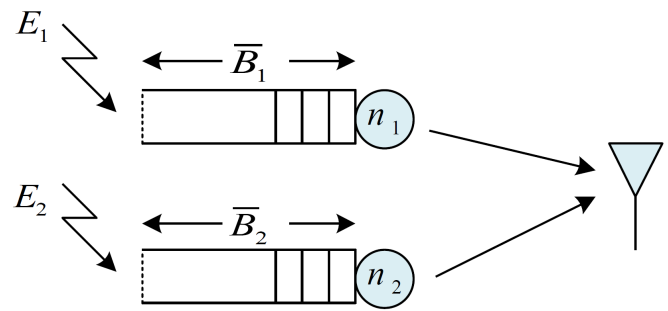}
		  \caption{System Model}
			\label{sysmod}
	\end{figure}

In this paper, we aim to investigate the effects of the correlation between the EH processes at different EH devices in a wireless network. To this end, we consider a network with two EH nodes transmitting data to a common base station over a random access channel as shown in Fig. \ref{sysmod}. Random channel access is a frequently used technique preferred for its distributed and stateless implementation, which is particularly suitable for low power and low duty-cycle sensor networks. In random channel access, the nodes transmit probabilistically over time resulting in occasional packet collisions. However, packet collisions are especially harmful in EH networks due to scarce resources, and should be avoided as much as possible. In this work, we develop and analyze a simple threshold-based transmission policy which grants access to an EH node only when its battery state exceeds a given threshold value. Threshold values are selected based on the battery capacities and the correlation among EH processes of the nodes to maximize the long-term throughput of the system.  

To illustrate the importance of choosing these threshold values intelligently, consider the following example.  Let both EH nodes have a battery capacity of two energy units. Suppose that the EH nodes are spatially close, so they harvest energy simultaneously when energy is available.  If the transmission thresholds are such that both nodes transmit a packet whenever they have one unit of energy, transmissions always result in a collision, and thus, the total network throughput is essentially zero. Meanwhile, if the thresholds are selected such that one EH node transmits a packet whenever it has one unit of energy, and the other node transmits a packet whenever it has two units of energy, there will be a collision once every two transmissions. Hence, with the latter choice of thresholds throughput increases to $0.5$ packets.  

We first derive the average throughput of the network by modeling the system as a discrete time Markov chain (DTMC) and obtaining its steady-state distribution. We then investigate two important special cases to obtain further insights into the selection of optimal transmission thresholds.  In the first special case, only one node harvests energy at any time, while in the second case the nodes always harvest energy simultaneously. These two cases demonstrate completely different optimal threshold characteristics.

Early research in the design of optimal energy management policies for EH networks consider an offline optimization framework \cite{off1,off2}, in which non-causal information on the exact realization of the EH processes are assumed to be available. In the online optimization framework \cite{on1,on2,on3}, the statistics governing the random processes are assumed to be available at the
transmitter, while their realizations are known only causally. The EH communication system is modeled as a  Markov decision process \cite{on1}, and dynamic programming can be used to optimize the throughput numerically. In the learning optimization framework, knowledge about the system behavior is further relaxed and even the statistical knowledge about the random processes
governing the system is not assumed, and the optimal policy
scheduling is learned over time \cite{learning}. In this paper we assume that EH nodes have no knowledge about the EH processes, and can only observe the amount of harvested energy in their own battery. Optimal threshold policies for an EH network is considered in \cite{game} based on a game theoretic approach. In \cite{dos}, authors optimize the throughput of a heterogeneous \emph{ad hoc} EH network by formulating it as an optimal stopping problem. In \cite{basco2015} multiple energy harvesting sensor nodes are scheduled by an access point which does not know the energy harvesting process and battery states of the nodes. However, in these works the EH processes at different devices are assumed to be independent.


\section{System Model}
\label{sec:SystemModel}
We adopt an interference model, where the simultaneous transmissions of two EH nodes result in a collision, and eventual loss of transmitted packets at the base station. Each node is capable of harvesting energy from an ambient resource (solar, wind, vibration, RF, etc.), and storing it in a finite capacity rechargeable battery. EH nodes have no additional power supplies. The nodes are data backlogged, and once they access the channel, they transmit until their battery is completely depleted.  Note that assuming that the nodes are always backlogged allows us to obtain the saturated system throughput. In the following, we neglect the energy consumption due to generation of data to better illustrate the effects of correlated EH processes\footnote{For example, data may be generated by a sensor continuously monitoring the environment. Then, the energy consumption of a sensor may be included as a continuous drain in the energy process, but due to possible energy outages, the data queues may no longer be backlogged. We leave the analysis of this case as a future work.}.

Time is slotted into intervals of unit length. In each time slot, the energy is harvested in units of $\delta$ joules. Let $E_{n}(t)$ be the energy harvested in time slot $t$ by node $n=1,2$.  We assume that $E_{n}(t)$ is an independent and identically distributed (i.i.d.) Bernoulli process with respect to time $t$. However, at a given time slot $t$, $E_{1}(t)$ and $E_{2}(t)$ may not be independent.  
The EH rates are defined as follows:
\begin{align}
\PRP{E_{1}(t)=\delta,E_{2}(t)=\delta}=p_{11},\nonumber\\ 
\PRP{E_{1}(t)=\delta,E_{2}(t)=0}=p_{10}, \nonumber\\
\PRP{E_{1}(t)=0,E_{2}(t)=\delta}=p_{01},\nonumber\\
\PRP{E_{1}(t)=0,E_{2}(t)=0}=p_{00},
\end{align}
where $p_{00}+p_{10}+p_{01}+p_{11}=1$\footnote{Note that if $p_{00}=p_{10}=p_{01}=p_{11}=1/4$, then EH nodes generate energy independently from each other.}.  

We assume that the transmission time $\varepsilon$ is much shorter than the time needed to harvest a unit of energy, i.e., $\varepsilon\ll 1$, and the nodes cannot simultaneously transmit and harvest energy. Transmissions take place at the beginning of time slots, and the energy harvested during time slot $t$ can be used for transmission in time slot $t+1$. The channel is non-fading, and has unit gain. Given transmission power $P$, the transmission rate, $r_n(t)$, $n=1,2$ is given by the Shannon rate, i.e., $r_n(t)=\log\left(1+P/N\right)$ (nats/sec/Hz), where $N$ is the noise power.


We consider a deterministic transmission policy which only depends on the state of the battery of an EH node. Each EH node independently monitors its own battery level, and when it exceeds a pre-defined threshold, the node accesses the channel. If more than one node accesses the channel, a collision occurs and both packets are lost. Note that, by considering such an easy-to-implement and stateless policy, we aim to achieve low-computational power at EH devices. 

The battery of each EH node has a finite capacity of $\bar{B}_{n}$, $n=1,2$. Let $B_n(t)$ be the state of the battery of EH node $n=1,2$ at time $t$. Node $n$ transmits whenever its battery state reaches $\gamma_n\leq \bar{B}_{n}$ joules, $n=1,2$. When node $n$ accesses the channel, it transmits at power $\frac{B_n(t)}{\varepsilon}$, i.e., the battery is completely depleted at every transmission. Hence, the time evolution of the battery states is governed by the following equation.
\begin{align}
B_{n}(t+1)=&\min\left\{\bar{B}_{n},\right. \nonumber\\
&\,\left. B_{n}(t)+E_{n}(t)\mathds{1}_{\left\{B_{n}(t)<\gamma_{i}\right\}}-\mathds{1}_{\left\{B_{n}(t)\geq\gamma_{i}\right\}}B_{n}(t)
\right\}, \label{Bi}
\end{align}
where $\mathds{1}_{a<b}=\begin{cases} 1 & \mbox{if }a<b\\ 0 & \mbox{if } a\geq b \end{cases}$ is the indicator function.  

Let $R_n(t)$ be the rate of {\em successful} transmissions, i.e.,
\begin{align}
R_{1}(t)=&\log\left(1+\frac{B_1(t)/\varepsilon}{N}\right)\mathds{1}_{\left\{B_{1}(t)\geq\gamma_{1},B_{2}(t)<\gamma_{2}\right\}},\label{R1t}\\
R_{2}(t)=&\log\left(1+\frac{B_2(t)/\varepsilon}{N}\right)\mathds{1}_{\left\{B_{1}(t)<\gamma_{1},B_{2}(t)\geq\gamma_{2}\right\}}.\label{R2t}
\end{align}

\section{Maximizing the Throughput}
\label{sec:ThroughputMaximization}

We aim at maximizing the long-term average total throughput by choosing the transmission thresholds intelligently, taking into account the possible correlation between the EH processes.  Let $\bar{R}_n(\gamma_1,\gamma_2)$ be the long-term average throughput of EH node $n$ when the thresholds are selected as $\gamma_1,\gamma_2$, i.e.,
\begin{equation}
\bar{R}_n(\gamma_1,\gamma_2)=\lim_{T\rightarrow\infty}\frac{1}{T} \sum^{T}_{t=1}R_{n}(t), \,\, n=1,2.\label{avgRnt}
\end{equation}
Then, the optimization problem of interest can be stated as
\begin{align}
\max_{\gamma_{1},\gamma_{2}}\   &\sum_{n}{\bar{R}_n(\gamma_1,\gamma_2)},\label{opt1}\\
&\text{s.t.}\ \  \ 1\leq\gamma_{n}\leq \bar{B}_{n}\ \ n=1,2.\label{opt2}
\end{align}

In order to solve the optimization problem (\ref{opt1})-(\ref{opt2}), we first need to determine the long term average total throughput in terms of the thresholds.  Note that for given $\gamma_1,\gamma_2$, the battery states of EH nodes, i.e., $\left(B_{1}(t),\ B_{2}(t)\right)\in \left\{0,\ldots, \gamma_1-1\right\}\times\left\{0,\ldots, \gamma_{2}-1\right\}$ constitute a finite two dimensional discrete-time Markov chain (DTMC), depicted in Fig. \ref{markov}. Let $\pi\left(i,\ j\right)=\PRP{B_{1}(t)=i,\ B_{2}(t)=j}$ be the steady-state distribution of the Markov chain for $i=0,\ldots,\gamma_{1}-1$ and $j=0,\ldots,\gamma_{2}-1$.

\begin{figure}[t]
  \centering
    \includegraphics[scale=.2]{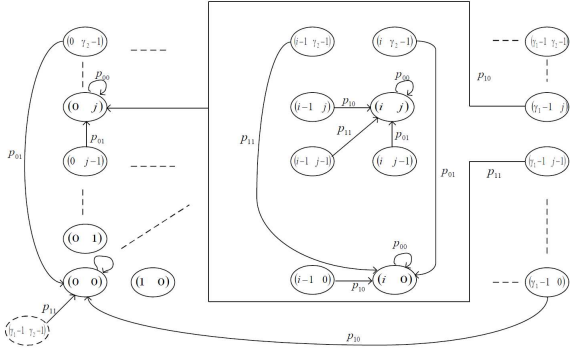}
		  \caption{Associated DTMC with joint battery states}
			\label{markov}
\end{figure}

\begin{theorem}\label{M}
The steady state distribution of DTMC associated with the joint battery state of EH nodes is $\pi\left(i,\ j\right) = \frac{1}{\gamma_{1}\gamma_{2}},$ for $i=0,\ldots,\gamma_{1}-1$ and $j=0,\ldots,\gamma_{2}-1$.
\end{theorem}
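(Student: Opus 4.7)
The plan is to reinterpret the joint battery-state chain as a random walk on a finite abelian group, after which the uniform distribution is stationary almost by inspection. Concretely, I would identify the state space $\{0,\ldots,\gamma_1-1\}\times\{0,\ldots,\gamma_2-1\}$ with the group $G=\mathbb{Z}_{\gamma_1}\times\mathbb{Z}_{\gamma_2}$: reading the update rule in (\ref{Bi}), in any slot each coordinate either increments by one unit of $\delta$ (if the corresponding node harvests) or stays put, with the transmit-and-reset convention at $\gamma_n$ coinciding with reduction modulo $\gamma_n$. Thus the one-step transition can be written, in units of $\delta$, as $(i,j)\mapsto\bigl((i+E_1(t))\bmod\gamma_1,\,(j+E_2(t))\bmod\gamma_2\bigr)$, with the increment $(E_1,E_2)$ drawn from the joint law $(p_{00},p_{10},p_{01},p_{11})$.

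\textbf{Main verification.} Given this identification, I would plug the ansatz $\pi(i,j)=1/(\gamma_1\gamma_2)$ into the global balance equation at an arbitrary target $(i,j)$. For each of the four energy events there is exactly one predecessor, namely $(i,j)$, $(i-1\bmod\gamma_1,\,j)$, $(i,\,j-1\bmod\gamma_2)$, and $(i-1\bmod\gamma_1,\,j-1\bmod\gamma_2)$, each contributing $(1/\gamma_1\gamma_2)\,p_{kl}$ to the inflow. Summing yields
\begin{equation*}
\sum_{k,l\in\{0,1\}}\frac{p_{kl}}{\gamma_1\gamma_2}=\frac{1}{\gamma_1\gamma_2},
\end{equation*}
which matches the outflow $\pi(i,j)\cdot 1$, so the balance equations hold at every state, including the corners and edges where the ``predecessors'' arise via wrap-around.

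\textbf{Alternative viewpoint and obstacles.} An equivalent one-line argument is the classical fact that on any finite group the uniform measure is invariant under convolution with any probability measure, since $\sum_{y}(1/|G|)\,\PRW(Y=y)=1/|G|$; this is exactly the situation here, so I would present the direct balance check as the proof and mention the group-theoretic view as intuition. The only genuine obstacle is the boundary bookkeeping: convincing oneself that the transmit-and-reset rule of (\ref{Bi}) coincides with modular reduction on the restricted state space, and that the four modular predecessors of every state lie inside that state space. I would not attempt to prove uniqueness of the stationary distribution, since in degenerate regimes such as $p_{10}=p_{01}=0$ with $\gcd(\gamma_1,\gamma_2)>1$ the chain splits into several closed classes; the uniform measure nevertheless remains \emph{a} stationary distribution, which is all the theorem asserts.
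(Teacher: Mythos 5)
Your proposal is correct, and its computational core is the same as the paper's: verify that the constant measure satisfies the balance equations at every state and then normalize. The difference is in the packaging, and yours is arguably cleaner. The paper writes out four separate families of balance equations \eqref{eq:m1}--\eqref{eq:m4} (interior states, the two edges, and the corner $(0,0)$) and checks the constant solution against each; you instead observe that the transmit-and-reset rule turns the chain into a random walk on the group $\mathbb{Z}_{\gamma_1}\times\mathbb{Z}_{\gamma_2}$ with increment law $(p_{kl})$, so every state has exactly one predecessor per energy event and the edge/corner cases dissolve into the modular arithmetic; the convolution-invariance of the uniform measure then explains \emph{why} the answer is uniform rather than merely confirming it. You are also more explicit than the paper about the one delicate point: as stated, the theorem implicitly presumes a unique steady state, which requires irreducibility. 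The paper addresses this with the remark that $p_{01},p_{10}\neq 0$ forces all states to be reachable (and later, in Section \ref{sec:HighPositiveCorrelation}, concedes that the result does not describe the limiting behavior when $p_{01}=p_{10}=0$), whereas you correctly note that uniformity survives as \emph{a} stationary distribution even in the reducible case while declining to claim uniqueness. No gaps; your boundary bookkeeping (reset at $\gamma_n$ equals reduction mod $\gamma_n$, and the $\min\{\bar{B}_n,\cdot\}$ clamp never binds since $\gamma_n\leq\bar{B}_n$) is exactly the point that needs checking, and it checks out.
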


\begin{proof}
The detailed balance equations for $i=1,\ldots,\gamma_{1}-1$ and $j=1,\cdots,\gamma_{2}-1$ are:
\begin{align}
\pi\left(i,\ j\right)(1-p_{00})=&\pi\left(i-1,\ j-1\right)p_{11}\nonumber\\
&+\pi\left(i-1,\ j\right)p_{10}+\pi\left(i,\ j-1\right)p_{01}. \label{eq:m1}
\end{align}
Whenever the battery state of node $n$ reaches $\gamma_{n}-1$, in the next state transition, given that it harvests energy, there is a transmission. Since the transmission time is much shorter than a time slot, i.e., $\varepsilon\ll 1$, after reaching state $\gamma_{n}$, node $n$ immediately transmits and transitions back to state $0$. Thus, the detailed balance equations for state $0$ are given as:
\begin{align}
\pi\left(i,\ 0\right)(1-p_{00}) = &\pi\left(i-1,\ 0\right)p_{10}+\pi\left(i,\ \gamma_{2}-1\right)p_{01}\nonumber\\
+&\pi\left(i-1,\ \gamma_{2}-1\right)p_{11},\ \ 1\leq i\leq\gamma_{1}-1\label{eq:m2},
\end{align}
\begin{align}
\pi\left(0,\ j\right)(1-p_{00})= &\pi\left(0,\ j-1\right)p_{01}+\pi\left(\gamma_{1}-1,\ j\right)p_{10}\nonumber\\
+&\pi\left(\gamma_{1}-1,\ j-1\right)p_{11},\ \ 1\leq j\leq\gamma_{2}-1\label{eq:m3},\\
\pi\left(0,\ 0\right)(1-p_{00})&=\pi\left(\gamma_{1}-1,\ \gamma_{2}-1\right)p_{11}\nonumber\\
+&\pi\left(\gamma_{1}-1,\ 0\right)p_{10}+\pi\left(0,\ \gamma_{2}-1\right)p_{01}.\label{eq:m4}
\end{align}
From \eqref{eq:m1}, it is clear that if $p_{01},p_{10}\neq 0$ then $\pi\left(i,\ j\right)\neq 0$ for all $i=1,\ldots,\gamma_{1}-1$ and $j=1,\ldots,\gamma_{2}-1$. Then, it can be verified that $\pi\left(i,\ j\right) = \pi\left(l,\ k\right)$  satisfies (\ref{eq:m1})-(\ref{eq:m4}) for all $i,j,k$, and $l$. Hence, the theorem is proven since $\sum^{\gamma_{2}-1}_{j=0}\sum^{\gamma_{1}-1}_{i=0}\pi\left(i,\ j\right)=1$.
\end{proof}

Once the steady state distribution of DTMC is available, we can obtain the average throughput values. Let $\delta'=\frac{\delta/\varepsilon}{N}$.
\begin{lemma}
\label{lem_avg_thr}
The average throughput of EH nodes 1 and 2 for $p_{01},p_{10}\neq 0$ are given as 
\small
\begin{align}
\bar{R}_{1}\left(\gamma_{1},\gamma_{2}\right)=&\log(1+\gamma_{1}\delta')\nonumber\\
&\times\left(\left(p_{10}+p_{11}\right)\sum^{\gamma_{2}-2}_{j=0}\pi\left(\gamma_{1}-1,\ j\right)+p_{10}\pi\left(\gamma_{1}-1,\ \gamma_{2}-1\right)\right)\nonumber\\
=&\frac{\log(1+\gamma_{1}\delta')\left[(\gamma_{2}-1)\left(p_{10}+p_{11}\right)+p_{10}\right]}{\gamma_{1}\gamma_{2}},\\
\bar{R}_{2}\left(\gamma_{1},\gamma_{2}\right)=&\log(1+\gamma_{2}\delta')\nonumber\\
&\times\left(\left(p_{01}+p_{11}\right)\sum^{\gamma_{1}-2}_{i=0}\pi\left(i,\ \gamma_{2}-1\right)+p_{01}\pi\left(\gamma_{1}-1,\ \gamma_{2}-1\right)\right)\nonumber\\
=&\frac{\log(1+\gamma_{2}\delta')\left[(\gamma_{1}-1)\left(p_{01}+p_{11}\right)+p_{01}\right]}{\gamma_{1}\gamma_{2}}.
\end{align}
\end{lemma}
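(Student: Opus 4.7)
The plan is to invoke the ergodic theorem for the finite, irreducible DTMC from Theorem~\ref{M} to replace the time average in \eqref{avgRnt} by the expectation of $R_n(t)$ under the uniform stationary distribution, and then enumerate the pairs (state, harvesting outcome) that actually produce a successful transmission by node $n$. I will write the proof for $\bar{R}_1$; the case of $\bar{R}_2$ is symmetric.

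The first step is to recognize that a successful transmission by node~1 can only originate from a state of the form $(\gamma_1-1,j)$: indeed, looking at \eqref{Bi} and the convention stated around \eqref{eq:m2}, node~1 fires (with instantaneous battery $\gamma_1$ and Shannon rate $\log(1+\gamma_1\delta')$) precisely when $B_1(t)=\gamma_1-1$ and $E_1(t)=\delta$, while it stays silent in every other state. The second step is to split the column $B_1(t)=\gamma_1-1$ according to whether node~2 is on the verge of firing:
\begin{itemize}
\item If $j<\gamma_2-1$, node~2 cannot transmit in this slot regardless of $E_2(t)$, so node~1 succeeds whenever $E_1(t)=\delta$, an event of probability $p_{10}+p_{11}$.
\item If $j=\gamma_2-1$, node~2 also fires whenever $E_2(t)=\delta$; for node~1 to succeed we need the joint event $\{E_1(t)=\delta,E_2(t)=0\}$, of probability $p_{10}$.
\end{itemize}

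Combining these contributions and applying the ergodic theorem gives
\[
\bar{R}_1(\gamma_1,\gamma_2)=\log(1+\gamma_1\delta')\Bigl[(p_{10}+p_{11})\!\!\sum_{j=0}^{\gamma_2-2}\!\pi(\gamma_1-1,j)+p_{10}\,\pi(\gamma_1-1,\gamma_2-1)\Bigr],
\]
which is exactly the first line of the claimed formula. The third step is to substitute the uniform distribution $\pi(i,j)=1/(\gamma_1\gamma_2)$ from Theorem~\ref{M}, at which point the sum collapses to $(\gamma_2-1)/(\gamma_1\gamma_2)$ and produces the closed-form expression in the second line. The argument for $\bar{R}_2$ is obtained by swapping the roles of the two nodes (and of $p_{10}\leftrightarrow p_{01}$).

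The only nontrivial step is the careful bookkeeping in the second step: one must remember that the Markov state space excludes the battery value $\gamma_n$ (node~$n$ transmits and resets within the slot in which it would reach that value), so transmissions are events attached to state/harvest pairs rather than to states alone, and one must separate the corner $(\gamma_1-1,\gamma_2-1)$ where the two harvesting indicators interact. Everything else is a direct application of Theorem~\ref{M} and arithmetic; the requirement $p_{01},p_{10}\neq 0$ is inherited from Theorem~\ref{M} to guarantee irreducibility and thus the validity of the ergodic theorem.
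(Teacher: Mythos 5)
Your proposal is correct and follows essentially the same route as the paper's proof: condition on the column of states $(\gamma_1-1,j)$, assign success probability $p_{10}+p_{11}$ to the states with $j\le\gamma_2-2$ and $p_{10}$ to the corner state $(\gamma_1-1,\gamma_2-1)$ where a collision can occur, then substitute the uniform stationary distribution from Theorem~\ref{M}. The only difference is that you make explicit the appeal to the ergodic theorem for the time average, which the paper leaves implicit.
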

\normalsize
\begin{proof}
Consider node 1. Note that whenever the batteries are in one of the states $\left(\gamma_{1}-1,\ j\right)$ for $j=0,\ldots,\gamma_{2}-2$, a unit of energy (of $\delta$ joules) is harvested at node $1$ with probability of $p_{10}+p_{11}$, and it transmits in the subsequent transition. Meanwhile, whenever the batteries are in state $\left(\gamma_{1}-1,\ \gamma_{2}-1\right)$, both nodes harvest a unit energy with probability $p_{11}$, and transmit in the subsequent transition resulting in a collision. Thus, in state $\left(\gamma_{1}-1,\ \gamma_{2}-1\right)$, EH node $1$ successfully transmits with probability $p_{10}$. Similar arguments apply for node 2.
\end{proof}


The following optimization problem is equivalent to (\ref{opt1})-(\ref{opt2}).
\begin{align}
\max_{\gamma_{1},\gamma_{2}}\   z(\gamma_{1},\gamma_{2})&\triangleq \frac{\log(1+\gamma_{1}\delta')\left[(\gamma_{2}-1)\left(p_{10}+p_{11}\right)+p_{10}\right]}{\gamma_{1}\gamma_{2}}\nonumber\\
+&\frac{\log(1+\gamma_{2}\delta')\left[(\gamma_{1}-1)\left(p_{01}+p_{11}\right)+p_{01}\right]}{\gamma_{1}\gamma_{2}},\label{optC1}\\
&\text{s.t.}\ \  \ 1\leq\gamma_{n}\leq\bar{B}_{n},\ \ n=1,2.\label{optC2}
\end{align}

Note that (\ref{optC1})-(\ref{optC2}) is an integer program. Since our main motivation is to investigate the effects of the correlated energy arrivals on the operation of EH networks, rather than to obtain exact optimal thresholds, we may relax the optimization problem by omitting the integrality constraints.  Nevertheless, the resulting relaxed optimization problem is still difficult to solve since the objective function is non-convex.  Hence, in the following, we obtain the optimal solution for two important special cases.

\section{Special Cases}\label{sec:SpecialCases}

Depending on the energy source and relative locations of the nodes, correlation among their EH processes may significantly vary. For example, if mechanical vibration is harvested, and the nodes are located far from each other, e.g., one EH device on one side of the road whereas the other one on the other side of a two-lane road, only the EH device on the side of the road where a car passes may generate energy from its vibration.  This is a case of {\em high negative correlation}.  Meanwhile, if solar cells are used as an energy source, EH processes at nearby nodes will have {\em high positive correlation}. 

\subsection{The Case of High Negative Correlation}
\label{sec:HighNegativeCorrelation}

We first analyze the case of high negative correlation. In particular, we have  $p_{00}=p_{11}=0$, $p_{10}=p$ and $p_{01}=1-p$ with $0<p<1$. Note that only one EH device generates energy at a given time.  Let $z^{(-)}\left(\gamma_{1},\gamma_{2}\right)$ be  the total throughput of EH network when the thresholds are $\gamma_{1},\gamma_{2}$, obtained by inserting the values of $p_{00},p_{11},p_{10},p_{01}$ in \eqref{optC1}. We have
\begin{align}
z^{(-)}\left(\gamma_{1},\gamma_{2}\right) = &\frac{\log(1+\gamma_{1}\delta')p}{\gamma_{1}}+\frac{\log(1+\gamma_{2}\delta')(1-p)}{\gamma_{2}}. \label{optN}
\end{align}

The following lemma establishes that an EH device transmits whenever it harvests a single unit of energy.  Interestingly, the optimal thresholds prevent any collisions between transmissions of EH devices, since at a particular time slot only one EH device has sufficient energy to transmit. 
\begin{lemma}
\label{HNC}
The optimal solution of (\ref{optC1})-(\ref{optC2}) when $p_{00}=p_{11}=0$, $p_{10}=p$ and $p_{01}=1-p$ with $0<p<1$, is $\gamma^{*}_{1}=0$, $\gamma^{*}_{2}=0$.
\end{lemma}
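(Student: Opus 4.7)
The plan is to exploit the fact that, under the high-negative-correlation assumption $p_{00}=p_{11}=0$, the objective $z^{(-)}(\gamma_1,\gamma_2)$ in \eqref{optN} \emph{separates} into a sum of two single-variable functions of the same form. Concretely, I would write
\[
z^{(-)}(\gamma_1,\gamma_2) = p\,f(\gamma_1) + (1-p)\,f(\gamma_2), \quad f(\gamma)\triangleq \frac{\log(1+\gamma\delta')}{\gamma},
\]
with weights $p,(1-p)>0$. Since the decision variables are decoupled and the constraint set is the product box $1\leq \gamma_n\leq \bar{B}_n$, maximizing $z^{(-)}$ reduces to separately maximizing $f(\gamma_n)$ over $\gamma_n\in\{1,\ldots,\bar{B}_n\}$ for $n=1,2$.

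Next I would show that $f$ is strictly decreasing on $[1,\infty)$, so that both maximizers are at the lower endpoint $\gamma_n^{*}=1$ (interpreting the lemma statement accordingly; the displayed $\gamma^*_n=0$ appears to be a typo, since the feasibility constraint \eqref{optC2} forces $\gamma_n\geq 1$, and the accompanying prose states that ``an EH device transmits whenever it harvests a single unit of energy''). To prove monotonicity, I would compute
\[
f'(\gamma) \;=\; \frac{1}{\gamma^{2}}\left(\frac{\gamma\delta'}{1+\gamma\delta'}-\log(1+\gamma\delta')\right),
\]
and set $u=\gamma\delta'>0$ to reduce the sign question to whether $h(u)\triangleq \log(1+u)-\tfrac{u}{1+u}$ is positive. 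A one-line derivative calculation gives $h'(u)=\tfrac{u}{(1+u)^2}>0$ with $h(0)=0$, hence $h(u)>0$ for $u>0$, so $f'(\gamma)<0$ on $(0,\infty)$.

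Having established strict monotonic decrease of $f$, each summand in $z^{(-)}$ is individually maximized at $\gamma_n=1$, and since the weights $p$ and $1-p$ are strictly positive, the sum is jointly maximized at $(\gamma_1^*,\gamma_2^*)=(1,1)$. I would close by briefly noting the operational interpretation: at these thresholds, at most one node ever has energy in a slot (because $p_{11}=0$), so collisions are impossible and the separation exploited in the first step is in fact tight. The only genuinely non-routine step is verifying monotonicity of $f$, but as shown above this is a short exercise via the auxiliary function $h(u)$; after that, the argument is entirely structural.
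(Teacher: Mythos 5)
Your proof is correct and follows essentially the same route as the paper's: the paper computes the gradient of $z^{(-)}$ (whose two components are exactly $p\,f'(\gamma_1)$ and $(1-p)\,f'(\gamma_2)$ in your notation) and concludes from its negativity that the smallest feasible thresholds are optimal. You go slightly further in two useful ways: you actually verify the sign of the derivative via the auxiliary function $h(u)=\log(1+u)-\tfrac{u}{1+u}$, a step the paper merely asserts, and you correctly read the stated optimum $\gamma^{*}_{n}=0$ as a typo for $\gamma^{*}_{n}=1$ given the constraint \eqref{optC2}, which the paper's own numerical results (Fig.~\ref{figHNC}) confirm.
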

\begin{proof}
Assume that $\gamma_{1}$ and $\gamma_{2}$ are non-negative continuous variables. Then, the gradient of $z^{(-)}\left(\gamma_{1},\gamma_{2}\right)$ is:
\begin{align}
\nabla z^{(-)}\left(\gamma_{1},\gamma_{2}\right) = \left[\frac{p \left(\delta'\gamma_{1}-\left(1+\delta'\gamma_{1}\right)\log \left(1+\gamma _1\delta'\right)\right)}{\gamma _{1}{}^2\left(1+\delta'\gamma_{1}\right)}\right.,\nonumber\\
\left.\frac{(1-p) \left(\delta'\gamma_{2}-\left(1+\delta'\gamma_{2}\right)\log \left(1+\gamma _2\delta'\right)\right)}{\gamma _{2}{}^2\left(1+\delta'\gamma_{2}\right)}\right].
\end{align}
Note that $\nabla z^{(-)}\left(\gamma_{1},\gamma_{2}\right)<0$ for all $\gamma_{1}\geq 0$, $\gamma_{2}\geq 0$ and $p$. Since $\nabla z^{(-)}<0$, we have $z^{(-)}\left(\gamma_{1},\gamma_{2}\right)>z^{(-)}\left(\hat{\gamma}_{1},\hat{\gamma_{2}}\right)$ for every $\gamma_{1}<\hat{\gamma}_{1}$ and $\gamma_{2}<\hat{\gamma}_{2}$. Then, the lemma follows.
\end{proof}

\subsection{The Case of High Positive Correlation}
\label{sec:HighPositiveCorrelation}

\begin{figure}[t]
  \centering
    \includegraphics[scale=0.2]{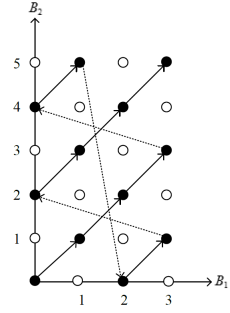}
		  \caption{Transitions of joint battery states for high positive correlation case.}
			\label{HPCex}
\end{figure}

Now, we consider the case of high positive correlation.  In particular, we investigate the optimal solution when EH process parameters are $p_{01}=p_{10}=0$, $p_{11}=p$ and $p_{00}=1-p$ with $0<p<1$; that is, either both EH devices generate energy or neither of them does. Note that in Theorem \ref{M} the steady state distribution of DTMC is derived assuming that all of the states are visited. However, in the case of high positive correlation, only a part of the state space is visited. 

In order to better illustrate this case, consider an EH network with thresholds $\gamma_{1}=4$ and $\gamma_{2}=6$. The state space of the corresponding DTMC is given in Fig. \ref{HPCex}. Large solid and empty circles represent visited and unvisited battery states, respectively. The solid lines represent the transitions of battery states when thresholds are not yet reached, and the dotted lines represent transitions when at least one of the nodes transmits. Also, arrows show the direction of transitions between the states.  Since only a subset of the state space is visited infinitely often, the average throughputs given in Lemma \ref{lem_avg_thr} are no longer valid. We establish the average throughput of EH network with high positive correlation by the following lemma.
\begin{lemma}
The average throughput $\bar{R}_{n}^{(+)}(\gamma_{1},\ \gamma_{2})$ of node $n=1,2$ for $p_{01}=p_{10}=0$, $p_{11}=p$ and $p_{00}=1-p$ is given as 
\begin{align}
\bar{R}_{n}^{(+)}(\gamma_{1},\ \gamma_{2})=&p\cdot\frac{\left[\frac{LCM(\gamma_{1},\ \gamma_{2} )}{\gamma_{n}}-1\right]}{LCM(\gamma_{1},\ \gamma_{2} )}\cdot\log(1+\gamma_{n}\delta'),\,\,n=1,2
\label{eq:avg_thr_positive}
\end{align}
where $LCM(\gamma_{1}, \gamma_{2})$ is the {\rm least common multiple} of $\gamma_{1}$ and $\gamma_{2}$.
\end{lemma}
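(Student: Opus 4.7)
The plan is to exploit the extreme correlation ($p_{01}=p_{10}=0$) to collapse the two-dimensional DTMC onto a single deterministic walk indexed by harvest epochs, and then recover the time average via a renewal-reward argument. The key observation is that a harvest increments both batteries by one simultaneously, with each battery immediately resetting to $0$ when it hits $\gamma_n$. Starting from the empty state $(0,0)$, the joint state immediately after the $k$-th harvest is therefore $(k\bmod\gamma_1,\,k\bmod\gamma_2)$, and node $n$ attempts a transmission at the $k$-th harvest if and only if $\gamma_n\mid k$.

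Next, I would count successful transmissions over one full cycle of the reduced walk. Setting $L=LCM(\gamma_1,\gamma_2)$, the walk returns to $(0,0)$ after exactly $L$ harvests. Within such a cycle, node $n$ attempts transmission $L/\gamma_n$ times, and a collision occurs precisely when $k$ is a common multiple of both $\gamma_1$ and $\gamma_2$, which within $\{1,\ldots,L\}$ happens only at $k=L$. Hence node $n$ achieves exactly $L/\gamma_n-1$ successful transmissions per cycle, each delivering rate $\log(1+\gamma_n\delta')$.

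Finally, I would translate this harvest-indexed count into a time average. Let $T_L$ be the number of slots needed for $L$ harvests to occur; since the inter-harvest gaps are i.i.d.\ geometric with mean $1/p$, $\ES{T_L}=L/p$. The renewal-reward theorem then yields
\begin{equation*}
\bar{R}_n^{(+)}(\gamma_1,\gamma_2) = \frac{(L/\gamma_n-1)\log(1+\gamma_n\delta')}{L/p} = p\cdot\frac{L/\gamma_n-1}{L}\cdot\log(1+\gamma_n\delta'),
\end{equation*}
which is the stated formula. The only delicate point is the one-collision-per-cycle count, which rests on the elementary number-theoretic fact that the smallest positive common multiple of $\gamma_1$ and $\gamma_2$ is $L$ itself; once this is in hand, the rest is routine bookkeeping on the deterministic cycle.
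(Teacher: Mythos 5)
Your proposal is correct and follows essentially the same route as the paper: a renewal-reward argument with renewal period $LCM(\gamma_1,\gamma_2)$ harvest events, a count of $LCM(\gamma_1,\gamma_2)/\gamma_n-1$ successful transmissions per cycle (the single collision occurring at the common multiple), and a conversion to a per-slot average via the harvest rate $p$. Your version is somewhat more explicit than the paper's (the $(k\bmod\gamma_1,\,k\bmod\gamma_2)$ description of the reduced walk and the geometric inter-harvest computation of the expected cycle length), but the underlying argument is identical.
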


\begin{proof}
Due to our transmission policy, EH node $n$ transmits whenever its battery level reaches $\gamma_{n}$, $n=1,2$. Note that both nodes reach their respective thresholds simultaneously every $LCM(\gamma_{1},\ \gamma_{2})$ {\em instances of EH events}. Since they transmit simultaneously, a collision occurs, and they both exhaust their batteries, i.e., the joint battery state transitions into state $(0, 0)$. The process repeats afterwards. Hence, the renewal period of this random process is $LCM(\gamma_{1},\ \gamma_{2})$.  In every renewal period, EH node $n=1,2$ makes $\frac{LCM(\gamma_{1},\ \gamma_{2} )}{\gamma_{n}}-1$ number of successful transmissions. Hence, by using renewal reward theory, and noting that on the average a unit of energy is harvested in $p<1$ proportion of time slots, we obtain \eqref{eq:avg_thr_positive}.
\end{proof}


Let $z^{(+)}(\gamma_{1},\ \gamma_{2})=\bar{R}_{1}^{(+)}(\gamma_{1},\ \gamma_{2})+\bar{R}_{2}^{(+)}(\gamma_{1},\ \gamma_{2})$ be the total throughput of a system with high positive correlation. Note that $z^{(+)}(\gamma_{1},\ \gamma_{2})$ is a non-convex function with respect to $\gamma_{1}$, and $\gamma_{2}$. Hence, in the following, we analyze the system in two limiting cases, i.e., when unit of energy harvested per slot, i.e., $\delta'$, is either very small or very large.
 
\subsubsection{Small Values of $\delta'$}
\label{sec:SmallValuesOfDelta}

For small values of $\delta'$, $\log(1+\gamma_{n}\delta')$ can be approximated by $\gamma_{n}\delta'$. Let $GCD(\gamma_1, \gamma_2)$ be the {\em greatest common divisor} of $\gamma_1$ and $\gamma_2$. By substituting $LCM(\gamma_{1},\ \gamma_{2} )=\frac{\gamma_{1}\gamma_{2} }{GCD(\gamma_{1},\ \gamma_{2} )}$ we obtain
\begin{align}
z^{(+)}\left(\gamma_{1},\gamma_{2}\right) &= 2\delta'p-GCD(\gamma_{1},\ \gamma_{2} )\left(\frac{1}{\gamma_{1}}+\frac{1}{\gamma_{2}}\right)\delta'p.\label{optPL}
\end{align}
Note that maximizing (\ref{optPL}) is equivalent to minimizing $GCD(\gamma_{1},\ \gamma_{2} )\left(\frac{1}{\gamma_{1}}+\frac{1}{\gamma_{2}}\right)$.  
Lemma \ref{HPCL} establishes that it is optimal to choose the thresholds as large as possible as long as the greatest common divisor of the two thresholds is equal to $1$. This is due to the fact that the objective function in (\ref{optPL}) is linear, and the optimum thresholds minimize the number of collisions.
\begin{lemma}
\label{HPCL}
The optimal thresholds for the case of high positive correlation for small values of $\delta'$, and for $\bar{B}_{2}>\bar{B}_{1}$  are $\gamma^{*}_{1}=\bar{B}_{1}$, $\gamma^{*}_{2}=\arg\max_{j}\bar{B}_{2}-j$ for $j = 1,\ldots,\bar{B}_{2}$, s.t., $GCD(\bar{B}_{1},j)=1$.
\end{lemma}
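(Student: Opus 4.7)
The plan is to rewrite the problem via (\ref{optPL}) as minimizing $\phi(\gamma_1,\gamma_2) := \gcd(\gamma_1,\gamma_2)(1/\gamma_1+1/\gamma_2)$ subject to $1 \le \gamma_n \le \bar{B}_n$, then reduce to the coprime case, and finally pin down the optimum at the specified corner.

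First I would eliminate the gcd factor by rescaling. Writing $d := \gcd(\gamma_1,\gamma_2)$ and $\gamma_n = d\alpha_n$ with $\gcd(\alpha_1,\alpha_2)=1$, the objective simplifies to $\phi = 1/\alpha_1+1/\alpha_2$ with constraints $\alpha_n \le \bar{B}_n/d$. Since $\phi$ is strictly decreasing in each $\alpha_n$ and the feasible box only shrinks as $d$ grows, the optimum is attained at $d=1$; that is, we may assume $\gcd(\gamma_1^*,\gamma_2^*)=1$.

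Next, on the coprime region the objective is $1/\gamma_1+1/\gamma_2$, strictly decreasing in each coordinate, so any optimum lies on the Pareto frontier of feasible coprime pairs. I would propose the candidate $(\bar{B}_1,\gamma_2^*)$, where $\gamma_2^*$ is the largest integer in $[1,\bar{B}_2]$ coprime to $\bar{B}_1$; such a $\gamma_2^*$ exists and in fact $\gamma_2^* \ge \bar{B}_2-\bar{B}_1+1$, because any $\bar{B}_1$ consecutive integers cover all residues modulo $\bar{B}_1$ and $\varphi(\bar{B}_1)\ge 1$. To certify optimality I would compare against any competing coprime pair $(\gamma_1,\gamma_2)$ with $\gamma_1<\bar{B}_1$ in two subcases: if $\gamma_2 \le \gamma_2^*$, strict monotonicity in each coordinate immediately yields $1/\gamma_1+1/\gamma_2 > 1/\bar{B}_1+1/\gamma_2^*$. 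If instead $\gamma_2 > \gamma_2^*$, maximality of $\gamma_2^*$ forces some prime $p$ to divide both $\bar{B}_1$ and $\gamma_2$, and coprimality of $(\gamma_1,\gamma_2)$ then forces $p\nmid\gamma_1$; a direct arithmetic comparison using this, together with the lower bound on $\gamma_2^*$, closes the case.

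The main obstacle is this second subcase, which is not settled by pure monotonicity: raising $\gamma_2$ above $\gamma_2^*$ could in principle offset the forced loss from $\gamma_1<\bar{B}_1$. The resolution hinges on the hypothesis $\bar{B}_2>\bar{B}_1$, which keeps the forbidden band $(\gamma_2^*,\bar{B}_2]$ narrow (width at most $\bar{B}_1-1$) so that the gain in $1/\gamma_2$ cannot dominate the deficit in $1/\gamma_1$; a careful bookkeeping that also uses how coprimality with the specific $\gamma_2$ further restricts which $\gamma_1$ are admissible is the subtle ingredient required to complete the argument.
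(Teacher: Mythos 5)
Your reduction to coprime pairs is correct, and in fact tighter than the paper's: writing $\gamma_n=d\alpha_n$ with $\gcd(\alpha_1,\alpha_2)=1$ gives $GCD(\gamma_1,\gamma_2)\left(\tfrac{1}{\gamma_1}+\tfrac{1}{\gamma_2}\right)=\tfrac{1}{\alpha_1}+\tfrac{1}{\alpha_2}$, so every non-coprime pair is matched by a feasible coprime pair with the same objective value, which can then only be improved by enlarging the coordinates. (The paper instead asserts that \emph{every} pair in $\Gamma$ beats \emph{every} pair outside $\Gamma$, which is false --- compare $(1,1)\in\Gamma$ against $(2,4)\notin\Gamma$ --- so your version of this step is the one that actually works.)

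The genuine gap is precisely the subcase you flag and defer: $\gamma_1<\bar B_1$ with $\gamma_2>\gamma_2^*$. No amount of bookkeeping will close it, because the lemma as stated is false there. Take $\bar B_1=10$, $\bar B_2=16$. The candidate named in the lemma is $(10,13)$, since $13$ is the largest $j\le 16$ coprime to $10$, and it gives $\tfrac{1}{10}+\tfrac{1}{13}=\tfrac{23}{130}\approx 0.1769$; but the coprime pair $(9,16)$ gives $\tfrac{1}{9}+\tfrac{1}{16}=\tfrac{25}{144}\approx 0.1736$, a strictly smaller value of the quantity to be minimized and hence a strictly larger throughput in \eqref{optPL}. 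Your own lower bound $\gamma_2^*\ge \bar B_2-\bar B_1+1$ is too weak to prevent this: the loss $\tfrac{1}{\gamma_2^*}-\tfrac{1}{\bar B_2}$ can exceed the penalty $\tfrac{1}{\bar B_1-1}-\tfrac{1}{\bar B_1}$ when $\bar B_1$ has several small prime factors pushing $\gamma_2^*$ well below $\bar B_2$. The paper's proof hides the same hole: after restricting to $\Gamma$ it simply says ``we choose the one that minimizes $\frac{1}{\gamma_1}+\frac{1}{\gamma_2}$,'' which correctly characterizes the optimum as $\arg\min\{\tfrac{1}{\gamma_1}+\tfrac{1}{\gamma_2}:\gcd(\gamma_1,\gamma_2)=1,\ \gamma_n\le\bar B_n\}$ but never shows this minimizer is the pair named in the statement. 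That identification requires an extra hypothesis --- it does hold, for instance, whenever $\gamma_2^*=\bar B_2$, i.e., when $\bar B_2$ is itself coprime to $\bar B_1$, since then the candidate minimizes $\tfrac{1}{\gamma_1}+\tfrac{1}{\gamma_2}$ over the whole box.
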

\begin{proof}
Note that  $0<\frac{1}{\gamma_{1}}+\frac{1}{\gamma_{2}}\leq 2$, for $1\leq\gamma_n\leq \bar{B}_n$, $n=1,2$. 
Let $\Gamma=\{(\gamma_1,\gamma_2): GCD(\gamma_1,\gamma_2)=1\}$.  Note that if $(\gamma_1,\gamma_2)\notin \Gamma$, then $GCD(\gamma_1,\gamma_2)\geq 2$. Hence, it can be shown that $z^{(+)}\left(\gamma_{1},\gamma_{2}\right)\geq z^{(+)}\left(\gamma_{1}',\gamma_{2}'\right)$, for all $(\gamma_1,\gamma_2)\in \Gamma$, and $(\gamma_1',\gamma_2')\notin \Gamma$. Among $(\gamma_1,\gamma_2)\in\Gamma$, we choose the one that minimizes $\frac{1}{\gamma_{1}}+\frac{1}{\gamma_{2}}$, and thus, proving the lemma.
%
%
\end{proof}

\subsubsection{Large Values of $\delta'$}
\label{sec:LargeValuesOfDelta}
For large values of $\delta'$, $\log(1+\gamma_{n}\delta')$ can be approximated by $\log(\gamma_{n}\delta')$. Also by substituting $LCM(\gamma_{1},\ \gamma_{2} )=\frac{\gamma_{1}\gamma_{2} }{GCD(\gamma_{1},\ \gamma_{2} )}$ in $z^{(+)}(\gamma_{1},\ \gamma_{2})$  we have:
\begin{align}
z^{(+)}\left(\gamma_{1},\gamma_{2}\right) =& \frac{\left(\gamma_{2}-GCD(\gamma_{1},\ \gamma_{2} )\right)\log(\gamma_{1}\delta')p}{\gamma_1\gamma_2}\nonumber\\
&+\frac{\left(\gamma_{1}-GCD(\gamma_{1},\ \gamma_{2} )\right)\log(\gamma_{2}\delta')p}{\gamma_1\gamma_2}.\label{optPH}
\end{align}
The optimal thresholds for this case is established in Lemma \ref{HPCH}. Since the objective function in (\ref{optPH}) has the property of \textit{diminishing returns}, i.e., the rate of increase in the function decreases for higher values of its parameters, each device will choose transmitting more often, equivalently short messages, using less energy.  However, transmissions are scheduled every time each node exceeds a threshold, which dictates small thresholds. When both EH devices transmit with small thresholds, there will be a large number of collisions, so the following lemma suggests that the aggregate throughput is maximized when one EH device transmits short messages, whereas the other transmits long messages.

\begin{lemma}
\label{HPCH}
The optimal thresholds for the case of high positive correlation for large values of $\delta'$ are $\gamma^{*}_{1}=B_{1}$, $\gamma^{*}_{2}=1$ for $\bar{B}_{1}>\bar{B}_{2}$, and they are $\gamma^{*}_{1}=1$, $\gamma^{*}_{2}=B_{2}$ for $\bar{B}_{2}>\bar{B}_{1}$.
\end{lemma}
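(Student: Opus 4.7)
The plan is to analyze \eqref{optPH} asymptotically in $\delta'$, factor out the dominant $\log\delta'$ term, and maximize its coefficient over the integer feasible set. Writing $\log(\gamma_n\delta')=\log\delta'+\log\gamma_n$ and collecting terms, \eqref{optPH} takes the form
\begin{align}
z^{(+)}(\gamma_{1},\gamma_{2})=p\log\delta'\cdot C(\gamma_{1},\gamma_{2})+p\cdot D(\gamma_{1},\gamma_{2}),
\end{align}
where $C(\gamma_{1},\gamma_{2})=\dfrac{\gamma_{1}+\gamma_{2}-2\,GCD(\gamma_{1},\gamma_{2})}{\gamma_{1}\gamma_{2}}$ and $D$ is a function uniformly bounded on the finite feasible set $\{1,\ldots,\bar{B}_{1}\}\times\{1,\ldots,\bar{B}_{2}\}$. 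For large $\delta'$, the first summand dominates, so identifying the maximizers of $C$ identifies the maximizers of $z^{(+)}$.

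To maximize $C$, I would substitute $\gamma_{1}=ga$ and $\gamma_{2}=gb$ with $g=GCD(\gamma_{1},\gamma_{2})$ and $GCD(a,b)=1$, obtaining $C=\dfrac{a+b-2}{g\,a\,b}$. This is strictly decreasing in $g$ whenever $a+b>2$, so any optimum must have coprime thresholds, i.e.\ $g=1$, reducing the problem to $C=\dfrac{a+b-2}{ab}$. I would then split into two cases. If $a=1$, then $C=1-1/b$, which under the constraint $\bar{B}_{2}>\bar{B}_{1}$ is maximized by $(\gamma_{1},\gamma_{2})=(1,\bar{B}_{2})$ with value $1-1/\bar{B}_{2}$. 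If instead $a,b\geq 2$, the inequality $(a-2)(b-2)\geq 0$ gives $ab\geq 2(a+b)-4$, hence $C\leq 1/2$, which is strictly less than $1-1/\bar{B}_{2}$ whenever $\bar{B}_{2}\geq 3$. The boundary subcase $\bar{B}_{2}=2$ (forcing $\bar{B}_{1}=1$) is disposed of by direct inspection of the only non-trivial feasible pairs $(1,1)$ and $(1,2)$. Hence $(1,\bar{B}_{2})$ is the unique maximizer of $C$ when $\bar{B}_{2}>\bar{B}_{1}$; by the symmetric argument that swaps the roles of the two nodes, $(\bar{B}_{1},1)$ is the unique maximizer when $\bar{B}_{1}>\bar{B}_{2}$.

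Because the $C$-gap between the claimed optimum and every other feasible candidate is strictly positive, and $D$ is uniformly bounded on the finite feasible set, there exists a threshold $\delta^{\star}$ such that for all $\delta'>\delta^{\star}$ the leading-order ranking is preserved and the claimed $(\gamma_{1}^{*},\gamma_{2}^{*})$ achieves the maximum of $z^{(+)}$, establishing the lemma. The main obstacle I anticipate is in fact not algebraic but this asymptotic bookkeeping: because the sign of the $O(1)$ correction $D$ is not controlled, one must rule out that $D$ overturns the strict leading-order $C$-gap. Given the finite feasible set, this reduces to a routine uniform bound, but it is the step that fixes the precise sense in which $\delta'$ must be ``large''.
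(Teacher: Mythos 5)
Your proof is correct, and it takes a genuinely different route from the paper's. The paper constructs an upper envelope $\hat z$ by replacing $GCD(\gamma_1,\gamma_2)$ with $1$ in \eqref{optPH}, treats $\gamma_1$ as a continuous variable, studies the sign of $\partial \hat z/\partial\gamma_1$ separately for $\gamma_2=1$, $\gamma_2=2$ (where the stationary point is the non-integer $2e$) and $\gamma_2\geq 3$ (where negativity requires $\delta'>3e^2$), and then compares the resulting candidates, using at the end that the envelope is tight at the claimed optimizers because their GCD equals $1$. You instead peel off the coefficient of $\log\delta'$ and reduce the problem to the exact discrete maximization of $C(\gamma_1,\gamma_2)=\left(\gamma_1+\gamma_2-2\,GCD(\gamma_1,\gamma_2)\right)/(\gamma_1\gamma_2)$, which you solve cleanly via the substitution $\gamma_1=ga$, $\gamma_2=gb$ and the inequality $(a-2)(b-2)\geq 0$. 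Your route avoids the continuous relaxation entirely (the paper's final comparison, ``one can show that $\hat z$ is maximized for\dots'', is left implicit and involves the non-integer candidate $2e$), and it makes precise in what sense $\delta'$ must be large via a finite-set, bounded-remainder argument; the paper's route, in exchange, yields a concrete numerical sufficient condition $\delta'>3e^2$. One small bookkeeping remark: within the branch $\bar{B}_2>\bar{B}_1$ your case split ``$a=1$'' versus ``$a,b\geq 2$'' omits the subcase $b=1$, $a\geq 2$, which gives $C=1-1/a\leq 1-1/\bar{B}_1<1-1/\bar{B}_2$ and therefore does not affect the conclusion, but it should be stated for completeness.
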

\begin{proof}
Let ${\hat z}$ be an upper envelope function for $z^{(+)}$, obtained by substituting $GCD(\gamma_{1},\ \gamma_{2} )= 1$ in (\ref{optPH}):
\begin{align}
{\hat z}\left(\gamma_{1},\gamma_{2}\right) = \frac{\left(\gamma_{2}-1\right)\log(\gamma_{1}\delta')p}{\gamma_1\gamma_2}+\frac{\left(\gamma_{1}-1\right)\log(\gamma_{2}\delta')p}{\gamma_1\gamma_2}.
\end{align}

Note that since $GCD(\gamma_{1},\ \gamma_{2} )\geq 1$, for every value of $\gamma_{1}$ and $\gamma_{2}$, we have ${\hat z}\left(\gamma_{1},\gamma_{2}\right)\geq z^{(+)}\left(\gamma_{1},\gamma_{2}\right)$.
First, we maximize ${\hat z}$ for a given $\gamma_{2}$ by obtaining the corresponding optimal $\gamma_{1}$.  Taking the partial derivative of ${\hat z}$ with respect to $\gamma_1$, we obtain:
\begin{align}
\frac{\partial {\hat z} }{\partial \gamma_{1}} = \frac{p}{\gamma _1^2 \gamma _2} \left[ \log \left(\gamma _1 \delta \right)+\log \left(\gamma _2 \delta \right)-\gamma _2 \left(\log \left(\gamma _1 \delta \right)-1\right)-1 \right].
\label{eq:par_derv_zup}
\end{align}
Note that $\gamma_2\in \{1,\ldots, {\bar B}_2\}$.  If $\gamma_2=1$, \eqref{eq:par_derv_zup} reduces to
\begin{align}
\frac{\partial {\hat z}\left(\gamma_{1},1\right) }{\partial \gamma_{1}} =& \frac{p}{\gamma _1^2 \gamma _2}\log\delta>0.
\end{align}

Since $\frac{\partial {\hat z}\left(\gamma_{1},1\right) }{\partial \gamma_{1}}>0$, the maximum value of ${\hat z}$ is attained when $\gamma_1=B_{1}$.
For $\gamma_2=2$, \eqref{eq:par_derv_zup} reduces to
\begin{align}
\frac{\partial {\hat z}\left(\gamma_{1},2\right) }{\partial \gamma_{1}} = &\frac{p}{\gamma _1^2 \gamma _2}\left(-\log \left(\gamma _1 \delta \right)+\log (2 \delta )+1\right)\nonumber\\
&= \left\{
\begin{array}{rl}
<0 & \text{if } \gamma_{1} > 2e,\\
\geq 0 & \text{if } \gamma_{1} \leq 2e,
\end{array} \right.
\end{align}
where $e$ is the Euler's constant. Since $\frac{\partial^{2} {\hat z}\left(2e,2\right) }{\partial \gamma_{1}{}^2}=-\frac{1}{16 e^3}<0$, the maximum value of ${\hat z}$ is attained when $\gamma_1=2e$.  Finally, if $\gamma_2\geq 3$, it can be shown that \eqref{eq:par_derv_zup} is always negative as long as $\delta>3e^2$.
Hence, the maximum value of ${\hat z}$ is attained for $\gamma_1=1$, if $\gamma_{2}\geq 3$.
By comparing the optimal values of $\hat z$ for all $\gamma_2\in \{1,\ldots, {\bar B}_2\}$, one can show that $\hat z$ is maximized for $(\gamma_1,\gamma_2)=\left(B_{1},\ 1\right)$ when $B_1>B_2$ and $(\gamma_1,\gamma_2)=\left(1,\ B_2\right)$ when $B_2>B_1$. Since $GCD(1,\ B_2)=GCD(B_1,\ 1)=1$, and $\hat z=z^{(+)}$ when $GCD(\gamma_1,\gamma_2)=1$, it follows that optimal points for ${\hat z}$ are also the optimal for $z^{(+)}$.
\end{proof}

\section{Numerical Results}
\label{sec:NumericalResults}
We first verify (\ref{optC1}) and (\ref{eq:avg_thr_positive}) by Monte Carlo simulations. In the simulation, we model the battery states using equation (\ref{Bi}). At each time slot $t$, we generate the joint EH process $(E_1(t),E_2(t))$ randomly.  We run the simulation for $10^4$ time slots and calculate the expected throughput by evaluating time average of the instantaneous rates as in (\ref{avgRnt}).

Fig. \ref{figerr} depicts the reliability of our analytical derivations. In particular, we measure both the percent relative error (\%RE), which is defined as $\%\text{RE} = \frac{\text{Analytical value}-\text{Simulation value}}{\text{Analytical value}}\times 100$, and the absolute error (\%AE), which is defined as \%AE = $(\text{Analytical value}-\text{Simulation value})\times 100$, for $\gamma_2=9$ versus $\gamma_{1}$. The results show a good match between the analytical and simulation results.
\begin{figure}[ht]
  \centering
    \includegraphics[scale=.2]{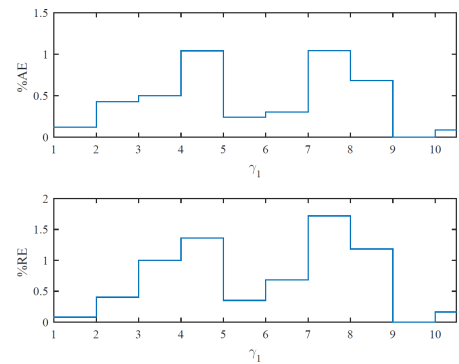}
		  \caption{\%AE and \%RE versus $\gamma_{1}$ with $\gamma_2=9$ and $\delta'=30$.}
			\label{figerr}
\end{figure}

Next, we verify the optimal thresholds by numerically evaluating (\ref{optC1}) and (\ref{eq:avg_thr_positive}) for the cases of high negative and high positive correlation. We assume that $\bar{B}_{1}=\bar{B}_{1}=10$ and $p=0.5$. The aggregate throughput of the network with respect to the thresholds $\gamma_{1}$ and $\gamma_{2}$	for the case of high negative correlation is depicted in Fig. \ref{figHNC}. It can be seen that the optimal thresholds are $\gamma^{*}_{1}=1$, $\gamma^{*}_{2}=1$, which is in accordance with Lemma \ref{HNC}.	
\begin{figure}[ht]
  \centering
    \includegraphics[scale=.3]{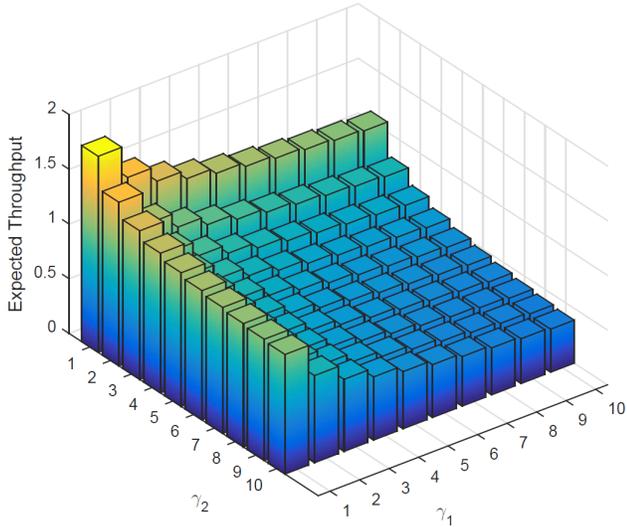}
		  \caption{Expected total throughput for high negative correlation with $\delta'=5$.}
			\label{figHNC}
\end{figure}

Fig. \ref{figHPCL} illustrates the aggregate throughput of the network for the case of high positive correlation with respect to $\gamma_{1}$ and $\gamma_{2}$ for $\delta'=0.04$. The abrupt drops in the value of the aggregate throughput are due to the fact that $GCD(\gamma_{1},\ \gamma_{2})$ varies at least by a factor of two,  which shows consistency with Lemma \ref{HPCL}.

\begin{figure}[ht]
  \centering
    \includegraphics[scale=.3]{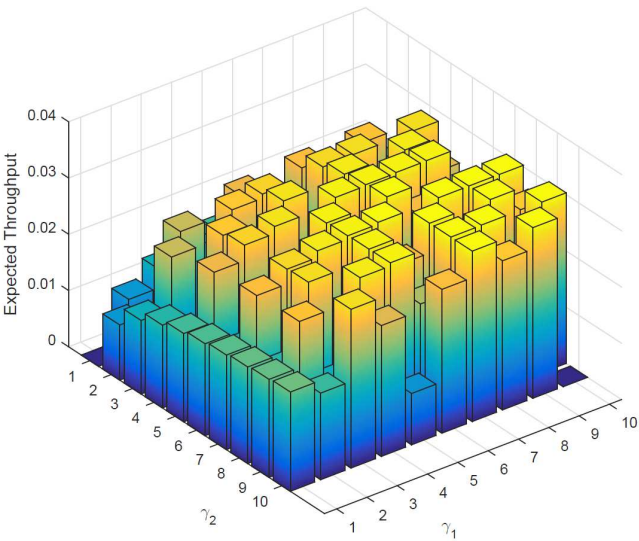}
		  \caption{Expected total throughput for high positive correlation with $\delta'=0.04$.}
			\label{figHPCL}
\end{figure}

In Fig. \ref{figHPCH}, the aggregate throughput is depicted for the case of high positive correlation with respect to $\gamma_{1}$ and $\gamma_{2}$ for $\delta'=30$. As expected from the results established in Lemma \ref{HPCH}, the optimal thresholds are either $(\gamma^{*}_{1},\ \gamma^{*}_{2})=(1,\ 10)$ or $(\gamma^{*}_{1},\ \gamma^{*}_{2})=(10,\ 1)$.

\begin{figure}[ht]
  \centering
    \includegraphics[scale=.3]{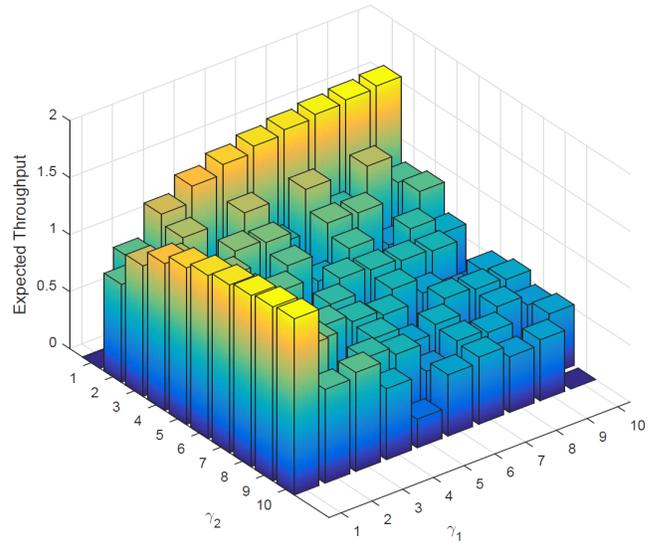}
		  \caption{Expected total throughput for high positive correlation with $\delta'=30$.}
			\label{figHPCH}
\end{figure}

\section{Conclusion}
\label{sec:Conclusion}
We have investigated the effects of correlation among the EH processes of different EH nodes as encountered in many practical scenarios. We have developed a simple threshold based transmission policy to coordinate EH nodes' transmissions in such a way to maximize the long-term aggregate throughput of the network. In the threshold policy, nodes have no knowledge about each other, and at any given time they can only monitor their own battery levels. Considering various assumptions regarding the EH statistics and the amount of the harvested energy, the performance of the proposed threshold policy is studied. The established lemmas in Section \ref{sec:ThroughputMaximization} show that different assumptions about the underlying EH processes and the amount of the harvested energy demonstrate completely different optimal threshold characteristics. As our future work, we will investigate the cases when data queues are not infinitely backlogged and when the channels exhibit fading properties.

\bibliographystyle{IEEEtran} 

\bibliography{Bibliography}

\begin{thebibliography}{10}
\providecommand{\url}[1]{#1}
\csname url@samestyle\endcsname
\providecommand{\newblock}{\relax}
\providecommand{\bibinfo}[2]{#2}
\providecommand{\BIBentrySTDinterwordspacing}{\spaceskip=0pt\relax}
\providecommand{\BIBentryALTinterwordstretchfactor}{4}
\providecommand{\BIBentryALTinterwordspacing}{\spaceskip=\fontdimen2\font plus
\BIBentryALTinterwordstretchfactor\fontdimen3\font minus
  \fontdimen4\font\relax}
\providecommand{\BIBforeignlanguage}[2]{{%
\expandafter\ifx\csname l@#1\endcsname\relax
\typeout{** WARNING: IEEEtran.bst: No hyphenation pattern has been}%
\typeout{** loaded for the language `#1'. Using the pattern for}%
\typeout{** the default language instead.}%
\else
\language=\csname l@#1\endcsname
\fi
#2}}
\providecommand{\BIBdecl}{\relax}
\BIBdecl

\bibitem{Paradiso}
J.~Paradiso and T.~Starner, ``Energy scavenging for mobile and wireless
  electronics,'' \emph{Pervasive Computing, IEEE}, vol.~4, no.~1, pp. 18--27,
  Jan 2005.

\bibitem{Niyato}
D.~Niyato, E.~Hossain, M.~Rashid, and V.~Bhargava, ``Wireless sensor networks
  with energy harvesting technologies: a game-theoretic approach to optimal
  energy management,'' \emph{Wireless Communications, IEEE}, vol.~14, no.~4,
  pp. 90--96, August 2007.

\bibitem{EHsource}
\BIBentryALTinterwordspacing
``Energy harvesting for structural health monitoring sensor networks,''
  \emph{Journal of Infrastructure Systems}, vol.~14, no.~1, pp. 64--79, 2008.
  [Online]. Available:
  \url{http://dx.doi.org/10.1061/(ASCE)1076-0342(2008)14:1(64)}
\BIBentrySTDinterwordspacing

\bibitem{Anthony}
D.~Anthony, W.~Bennett, M.~Vuran, M.~Dwyer, S.~Elbaum, A.~Lacy, M.~Engels, and
  W.~Wehtje, ``Sensing through the continent: Towards monitoring migratory
  birds using cellular sensor networks,'' in \emph{Information Processing in
  Sensor Networks (IPSN), 2012 ACM/IEEE 11th International Conference on},
  April 2012, pp. 329--340.

\bibitem{book}
J.~Trinnaman and A.~Clarke, \emph{2004 Survey of energy resources}.\hskip 1em
  plus 0.5em minus 0.4em\relax Elsevier, 2004.

\bibitem{off1}
M.~Antepli, E.~Uysal-Biyikoglu, and H.~Erkal, ``Optimal packet scheduling on an
  energy harvesting broadcast link,'' \emph{Selected Areas in Communications,
  IEEE Journal on}, vol.~29, no.~8, pp. 1721--1731, September 2011.

\bibitem{off2}
\BIBentryALTinterwordspacing
B.~Devillers and D.~G{\"{u}}nd{\"{u}}z, ``A general framework for the
  optimization of energy harvesting communication systems with battery
  imperfections,'' \emph{CoRR}, vol. abs/1109.5490, 2011. [Online]. Available:
  \url{http://arxiv.org/abs/1109.5490}
\BIBentrySTDinterwordspacing

\bibitem{on1}
Z.~Wang, A.~Tajer, and X.~Wang, ``Communication of energy harvesting tags,''
  \emph{Communications, IEEE Transactions on}, vol.~60, no.~4, pp. 1159--1166,
  April 2012.

\bibitem{on2}
A.~Aprem, C.~Murthy, and N.~Mehta, ``Transmit power control policies for energy
  harvesting sensors with retransmissions,'' \emph{Selected Topics in Signal
  Processing, IEEE Journal of}, vol.~7, no.~5, pp. 895--906, Oct 2013.

\bibitem{on3}
J.~Lei, R.~Yates, and L.~Greenstein, ``A generic model for optimizing
  single-hop transmission policy of replenishable sensors,'' \emph{Wireless
  Communications, IEEE Transactions on}, vol.~8, no.~2, pp. 547--551, Feb 2009.

\bibitem{learning}
\BIBentryALTinterwordspacing
P.~Blasco, D.~G{\"{u}}nd{\"{u}}z, and M.~Dohler, ``A learning theoretic
  approach to energy harvesting communication system optimization,''
  \emph{CoRR}, vol. abs/1208.4290, 2012. [Online]. Available:
  \url{http://arxiv.org/abs/1208.4290}
\BIBentrySTDinterwordspacing

\bibitem{game}
N.~Michelusi and M.~Zorzi, ``Optimal random multiaccess in energy harvesting
  wireless sensor networks,'' in \emph{Communications Workshops (ICC), 2013
  IEEE International Conference on}, June 2013, pp. 463--468.

\bibitem{dos}
H.~Li, C.~Huang, P.~Zhang, S.~Cui, and J.~Zhang, ``Distributed opportunistic
  scheduling for energy harvesting based wireless networks: A two-stage probing
  approach,'' \emph{Networking, IEEE/ACM Transactions on}, vol.~PP, no.~99, pp.
  1--14, 2015.

\bibitem{basco2015}
P.~Blasco and D.~Gunduz, ``Multi-access communications with energy harvesting:
  A multi-armed bandit model and the optimality of the myopic policy,''
  \emph{Selected Areas in Communications, IEEE Journal on}, vol.~33, no.~3, pp.
  585--597, March 2015.

\end{thebibliography}
\end{document}